\documentclass[
  aps,
  prd,
  twocolumn,
  nofootinbib,
  superscriptaddress
]{revtex4-2}

\usepackage{amsmath,amssymb,bm}
\usepackage{ulem}
\usepackage{dcolumn}
\usepackage{mathrsfs}
\usepackage{hyperref}
\usepackage{graphicx}
\usepackage{float}
\usepackage{tikz}
\usetikzlibrary{arrows.meta,positioning,calc}
\usepackage{microtype}
\usepackage{xcolor}
\usepackage{cancel}
\usepackage{changes}

\newcommand{\dd}{\mathrm d}
\newcommand{\Lie}{\mathcal L}
\newcommand{\eps}{\epsilon}

\newtheorem{theorem}{Theorem}
\newtheorem{definition}{Definition}

\begin{document}

\title{Constraints on the number of inner horizons from gravitational Noether charges}

\author{Jing-Peng Ye}
%\email{yejingpeng@tju.edu.cn}
\affiliation{Center for Joint Quantum Studies and Department of Physics, School of Science, Tianjin University, Yaguan Road 135, Jinnan District, 300350 Tianjin, P. R. China}

\author{Run-Qiu Yang}
\email{aqiu@tju.edu.cn}
\affiliation{Center for Joint Quantum Studies and Department of Physics, School of Science, Tianjin University, Yaguan Road 135, Jinnan District, 300350 Tianjin, P. R. China}

\begin{abstract}
A central question in the study of black hole interiors is how matter and geometry constrain the number of horizons. In Einstein gravity, one known method is to use the ADM decomposition of the field equations to derive a differential inequality for the lapse, which then constrains the possible existence of multiple inner horizons. We show that the same obstruction can instead be formulated through the Komar integral as a bulk-boundary balance of gravitational Noether charge. In higher-curvature gravity, a direct ADM analysis becomes more complicated, whereas the gravitational Noether charge naturally retains this bulk-boundary structure. We naturally apply the gravitational Noether charge analysis to Lovelock gravity. We introduce a TCC-like curvature condition and prove that, together with the thermodynamically motivated assumption of positive Wald entropy, it allows at most one nondegenerate inner Killing horizon inside a nondegenerate event horizon. This suggests that the obstruction reflects a compatibility between bulk gravitational response and horizon thermodynamics.

\end{abstract}

\maketitle

\section{Introduction}

\label{sec:intro}

Black holes are fascinating extreme objects predicted by general relativity and among the most important subjects in theoretical physics. From LIGO's first detection of gravitational waves from a binary black hole merger\cite{LIGOScientific:2016aoc} to the images of M87*~\cite{EventHorizonTelescope:2019dse} and Galactic Center black hole released by the Event Horizon Telescope (EHT)~\cite{EventHorizonTelescope:2022wkp}, black holes have transformed from purely mathematical objects into observable real astrophysical objects. A central feature of a black hole is the existence of a causal boundary: the event horizon, which prevents interior signals from reaching future null infinity. This does not mean that the black hole interior is physically unimportant. On the contrary, the black-hole interior plays a central role in theoretical physics. Its study is closely related to a number of fundamental problems, including the global causal structure of classical black hole spacetimes~\cite{Carter:1968rr,Hawking:1974ar}, Cauchy horizons and the strong cosmic censorship (SCC) conjecture~\cite{Penrose:1969pc,Davey:2024xvd}, the Hawking–Penrose singularity theorems~\cite{Hawking:1970zqf}, blueshift instability and mass inflation near inner horizons~\cite{Poisson:1989zz,Poisson:1990eh,Ori:1991zz}, and Kasner/BKL dynamics near the singularity region~\cite{Belinsky:1970ew,Cai:2023igv,duan2026kasnerepochsorderedoscillations,DeClerck:2023fax,Arean:2024pzo,Caceres:2026mug}. Understanding the geometry and causal structure of the black hole interior remains an important problem.

As horizons are the basic geometric objects characterizing the causal structure of the black hole interior, a natural geometric and physical question is: how many different horizons can exist inside a black hole? It is well known that the Schwarzschild black hole has no inner horizon, whereas the Reissner–Nordstr\"{o}m and Kerr black holes each possess one inner horizon inside the event horizon. Some general matter models or gravitational theories can admit black hole solutions with multiple inner horizons ~\cite{Gao:2017vqv,Gao:2021kvr,Camanho:2011rj,Kastor:1992nn,Nakao:1994mm,Bronnikov:2012mf,Nojiri:2017kex}. For the horizon structure inside black holes, a series of studies on no-inner horizon theorems and constraints on the number of inner horizons has emerged in recent years. Many works found in specific matter models that nontrivial scalar hair, vector hair, or nonlinear electrodynamics couplings lead to strict no-inner horizon conclusions~\cite{Hartnoll:2020rwq,Cai:2020wrp,Dias:2021afz,Devecioglu:2023hmn}. Subsequently, some studies further moved beyond specific matter models and instead used energy conditions and the geometric structure of the black hole interior to directly constrain the number of inner horizons~\cite{yang2022horizons,ye2026innerkillinghorizonstationary}. In particular, it was proved that under the strong energy condition (SEC) or the null energy condition (NEC), at most one nondegenerate inner Killing horizon can exist inside a nondegenerate event horizon. These results show that the horizon structure inside a black hole is not arbitrary, but is jointly constrained by the properties of matter, energy conditions, and the structure of the gravitational field equations.

However, the above constraints on inner horizons are mainly established within Einstein gravity. From the viewpoint of effective field theory, general relativity can be regarded as the leading effective description of gravity at low energies and low curvature scales. When ultraviolet or quantum effects at higher energy scales are considered, higher-curvature corrections generally appear in the gravitational effective action~\cite{Donoghue:1994dn,Burgess:2003jk,Zwiebach:1985uq,Boulware:1985wk}. Therefore, when the gravitational action contains higher-curvature corrections, do the constraints on the number of horizons inside a black hole still hold? This question remains open. Related no-inner horizon theorems have been obtained for specific scalar-hair models with Gauss–Bonnet couplings \cite{Devecioglu:2023hmn}, using conserved charges associated with scaling symmetries of the reduced action. Unlike these model dependent results, our aim is to establish a geometric constraint on the number of inner Killing horizons without specifying the matter Lagrangian.

To answer this question, we investigate the constraints on inner Killing horizons inside static black holes in Lovelock gravity. In Einstein gravity, besides the local argument based on the lapse equation, the constraint on the number of inner Killing horizons admits an equivalent formulation in terms of an integrated Komar charge balance law. 
%\sout{besides the local argument based on the lapse equation, the Einstein inner horizon constraint admits an equivalent formulation in terms of an integrated Komar charge balance law.} 
%\sout{On this basis,} 
We show that, in Einstein gravity, the Komar charge coincides with the gravitational Noether charge associated with the diffeomorphism symmetry generated by a Killing vector. In higher-curvature gravity, the ADM decomposition generally leads to lapse equations involving complicated curvature contractions, making it difficult to establish analogous constraints on the number of inner horizons. By contrast, the Komar charge formulation admits a natural extension through the gravitational Noether charge, whose expression remains particularly simple in Lovelock gravity. %{\color{red}(Comment by YRQ: Add sentences to mention three aspects: (1) modified gravity is important! (2) ADM decomposition is complicated and how to restrict number of horizon based this approach is unknown (3) ``Komar charge'' is easy to be generalized in to modified gravity theories.)}% 
We therefore construct the corresponding  balance law based on gravitational Noether charge in Lovelock gravity.  On the one hand, this balance law naturally identifies the Ricci-like curvature tensor controlling the bulk contribution. On the other hand, it also relates the Noether charge on the horizon to the surface gravity and Wald entropy. We thereby prove that, under an appropriate TCC-like curvature condition and positive Wald entropy, a static Lovelock black hole with a nondegenerate event horizon can contain at most one nondegenerate inner Killing horizon. Our result suggests that the obstruction to multiple inner horizons is not a specific feature of the Einstein equations, but reflects a deeper compatibility condition between the gravitational response in the bulk and the thermodynamic data carried by the inner horizons. In this work, we use units in which $c=\hbar=G=1$.

\section{Set up and geometric assumptions}
\label{sec:pocket}

Consider a $D=d+1$ dimensional static spacetime $(\mathcal M,g_{\mu\nu})$ with a Killing vector field $\xi^{\mu}$ associated with the static symmetry. Choose the time coordinate $t$ such that $(\partial_t)^{\mu}=\xi^{\mu}$ and foliate the spacetime $\mathcal{M}$ by a series of constant-$t$ slices $\{\Sigma_{t}\}$. The metric then has the following $d+1$ decomposition:
\begin{equation}\label{metric}
    ds^2=-N^2\mathrm{d}t^{2}+h_{ab}\mathrm{d}x^{a}\mathrm{d}x^{b},
\end{equation}
where $N$ is the lapse function and $h_{ab}$ is the induced metric on $\Sigma_{t}$. We adopt the following index convention: Greek indices $\mu,\nu=0,1,\dots,d$ are spacetime indices and Latin indices $a,b=1,2,\dots,d$ are spatial indices on $\Sigma_{t}$.

In a static spacetime, Killing horizons are particularly natural to study because they possess a well-defined equilibrium thermodynamic description, characterized by quantities such as surface gravity and entropy. A Killing horizon $\mathcal{H}$ is a null hypersurface generated by the Killing vector $\xi^{\mu}$, whose norm is the negative of the square lapse (i.e., $\xi^{\mu}\xi_{\mu}=-N^2$). Hence, the horizon is located at the zeros of the lapse function: $N^{2}|_{\mathcal{H}}=0$. In this work, we assume that the Killing horizons under consideration are regular and admit compact spatial cross sections.

To specify the black hole interior relevant to our discussion, we follow Ref.~\cite{yang2022horizons}.
We choose a single connected component of the future event horizon and define the associated black hole interior as the region reached by future-directed ingoing null geodesics emanating from it. As illustrated in Fig.~\ref{fig:Black_hole_interior}, the Schwarzschild black hole has no inner Killing horizon, whereas the Kerr-Newman black hole contains at most one inner Killing horizon. Suppose that two consecutive nondegenerate inner Killing horizons $\mathcal{H}_{1}$ and $\mathcal{H}_{2}$, where $\mathcal{H}_{1}$ lies closer to the event horizon than $\mathcal{H}_{2}$, occur on the same black hole interior branch, as illustrated in Fig.~\ref{fig:pocket}. The portion of spacetime enclosed by these two horizons is denoted by $\mathcal V$. Since $\xi^\mu$ is timelike throughout $\mathcal V$, we call $\mathcal V$ a static region. On the constant-$t$ hypersurface $\Sigma_{t}$, the spatial section of the static region $\mathcal{D}:=\Sigma_{t}\cap \mathcal{V}$ obeys
\begin{equation}
N^2=0\quad\hbox{on}\quad\partial\mathcal D,\qquad
N^2>0\quad\hbox{in}\quad{\mathring{\mathcal D}}.
\label{eq:pocket}
\end{equation}
Here, $\partial\mathcal D=(\mathcal H_1\cup\mathcal H_2)\cap\Sigma_t$, and $\mathring{\mathcal D}$ denotes the interior of $\mathcal D$.

\begin{figure}[t]
    \centering
    \includegraphics[width=0.65\linewidth]{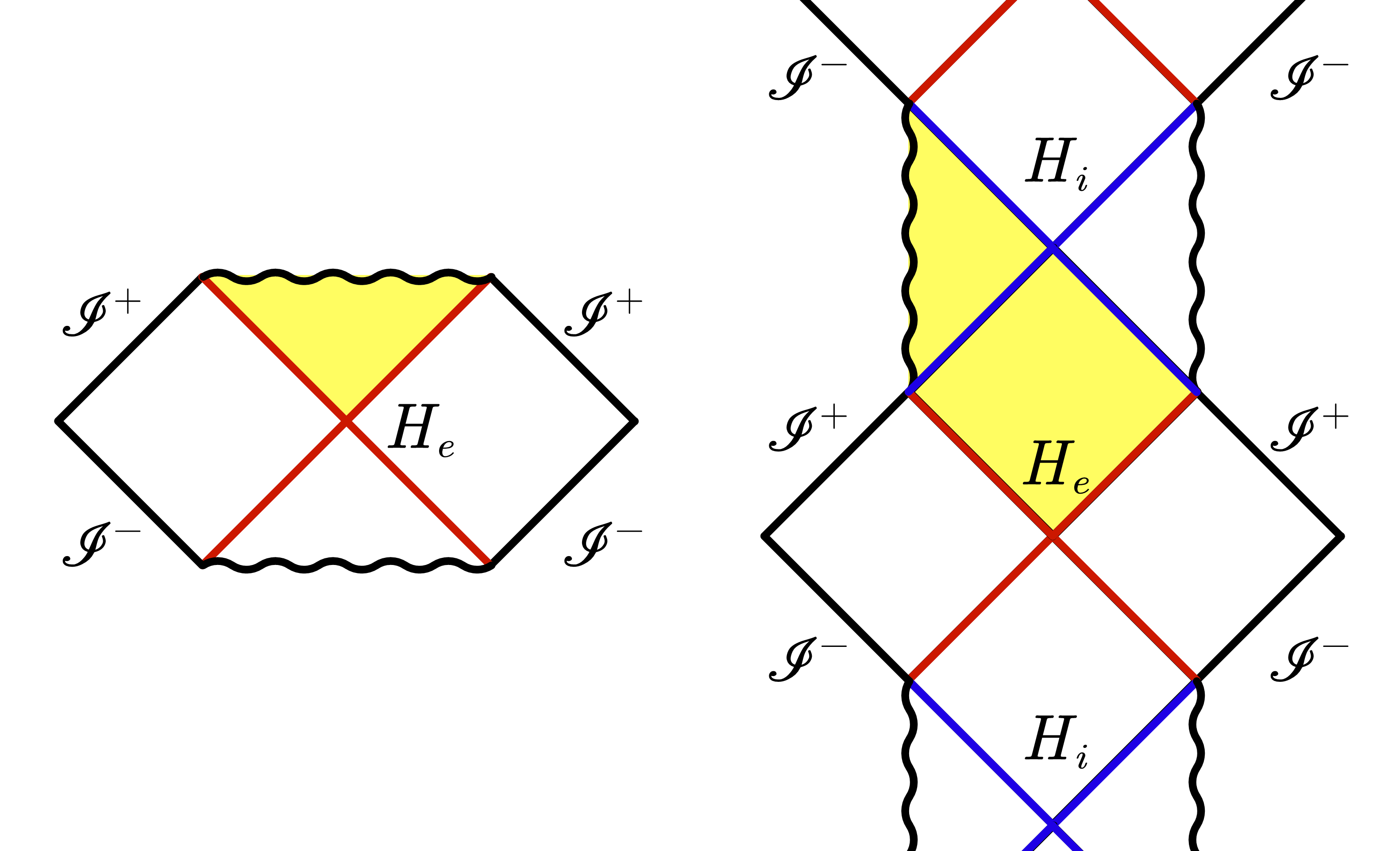}
    \caption{Left: Penrose diagrams of Schwarzschild black hole. Right: Penrose diagrams of Kerr-Newman black hole. The red and blue lines denote the event and inner Killing horizons, respectively, while the yellow region denotes the black hole interior.}
\label{fig:Black_hole_interior}
\end{figure}

\begin{figure}[t]
    \centering
    \includegraphics[width=0.65\linewidth]{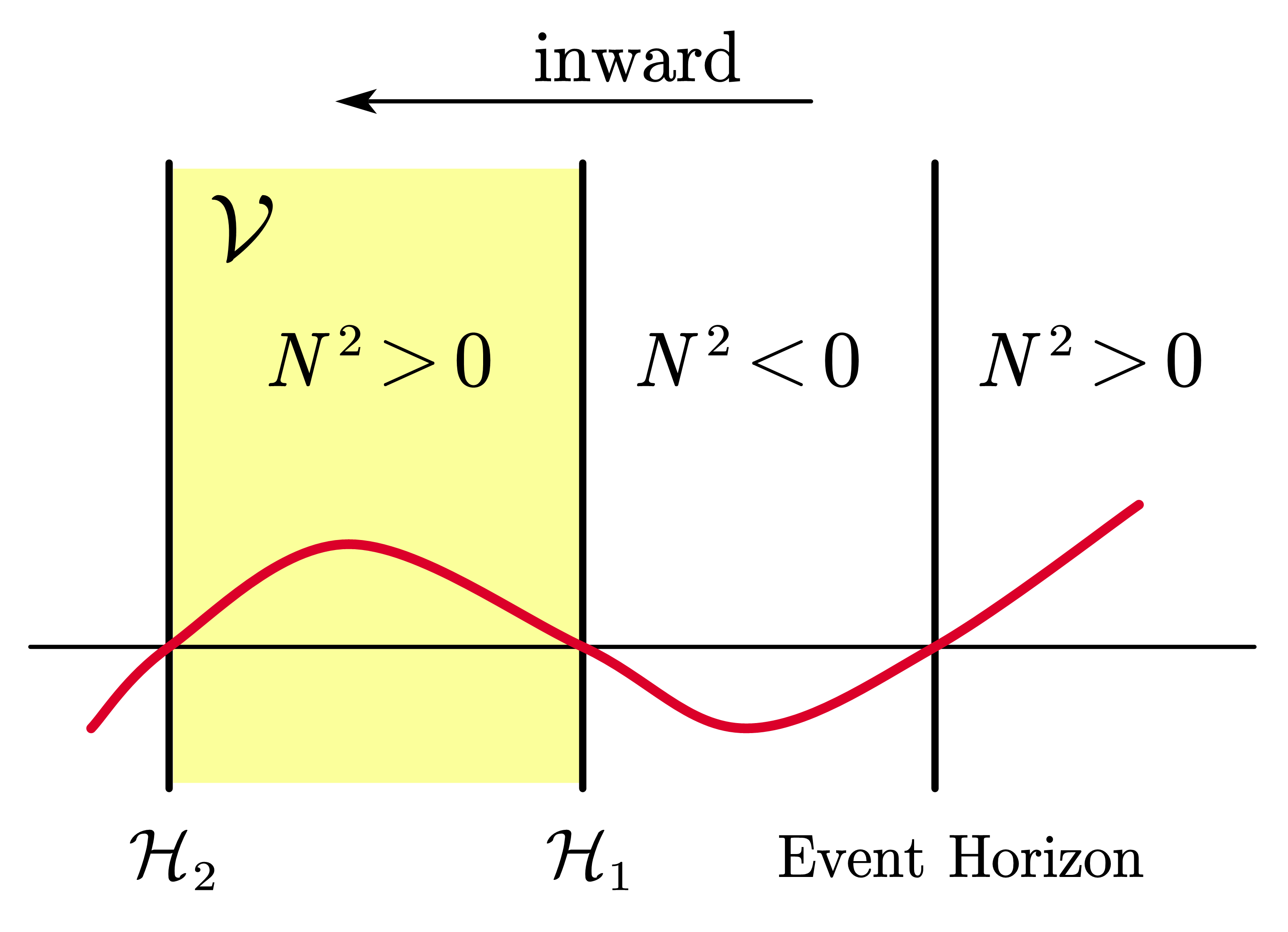}
    \caption{Schematic lapse structure inside a static black hole with two consecutive nondegenerate inner Killing horizons $\mathcal H_1$ and $\mathcal H_2$. The yellow region $\mathcal V$ between them is static region, with $N^2>0$.}
\label{fig:pocket}
\end{figure}

To characterize the two Killing horizons bounding the static region $\mathcal V$, we introduce their surface gravities, whose signs will play a central role in the argument below. The surface gravity $\kappa$ is defined as
\begin{equation}\label{SG}
    \xi^{\nu}\nabla_{\nu}\xi^{\mu}=\kappa \xi^{\mu}\qquad \text{on}\ \mathcal{H}.
\end{equation}
In the static case considered here, the zeroth law of black hole mechanics~\cite{Bardeen:1973gs} states that the surface gravity $\kappa$ is constant over each connected Killing horizon. If $\kappa=0$ on $\mathcal{H}$, this Killing horizon is said to be a degenerate horizon. In the present study, we are interested in black holes with nondegenerate Killing horizons. Under the regular cross-section and orbit assumption of R\'{a}cz and Wald, a regular Killing horizon with nonzero constant surface gravity admits a local extension in which it becomes a branch of a regular bifurcate Killing horizon~\cite{Racz:1992bp}. Moreover, R\'{a}cz and Wald showed that, in a static spacetime with a bifurcate Killing horizon, the hypersurfaces orthogonal to the Killing field $\xi^{\mu}$ extend smoothly to the bifurcation surface. For the two boundary horizons $\mathcal H_1$ and $\mathcal H_2$, let
$B_1$ and $B_2$ denote the corresponding bifurcation surfaces, as
illustrated in Fig.~\ref{fig:bifurcation}. We assume that the hypersurface $\mathcal D$ extends smoothly to the bifurcation surfaces $B_1$ and $B_2$ associated with the two boundary horizons, so that
\begin{equation}
    \partial \mathcal D=B_1\cup B_2 .
\end{equation}
On bifurcation $B$, the Killing vector vanishes (i.e., $\xi^{\mu}\big|_{B}=0$). To fix orientation conventions, let $n^{\mu}$ be a future-directed unit normal to $\Sigma_{t}$ and let $s^a$ be the unit spacelike normal to the horizon section within $\Sigma_t$, chosen to point toward the event horizon. The binormal is then
\begin{equation}
 \eps_{\mu\nu}=2n_{[\mu}s_{\nu]},
 \qquad
 \eps_{\mu\nu}\eps^{\mu\nu}=-2.
 \label{eq:binormal}
\end{equation}
Since $\xi^{\mu}$ vanishes on the bifurcation surface $B$, its derivative there acts purely in two-dimensional normal plane on $B$. With the binormal defined, the same surface gravity is therefore characterized by
\begin{equation}\label{eq:bifurcation-kappa}
    \nabla_\mu\xi_\nu=\kappa\eps_{\mu\nu}.
\end{equation}

\begin{figure}[t]
    \centering
    \includegraphics[width=0.6\linewidth]{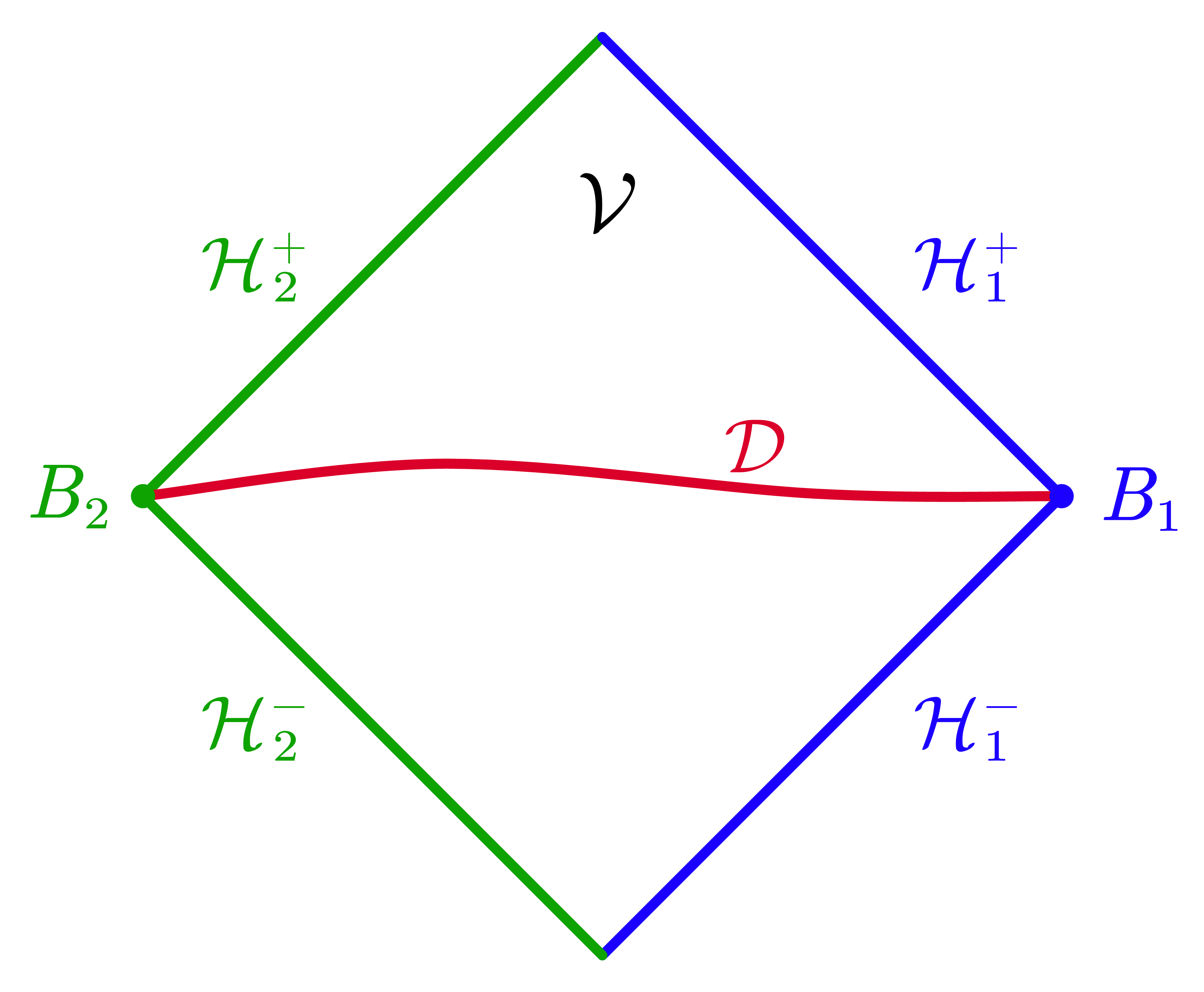}
    \caption{Schematic bifurcate extension of the static region $\mathcal V$ bounded by the two nondegenerate inner Killing horizons $\mathcal H_1$ and $\mathcal H_2$. The green branches denote $\mathcal H_2^\pm$, which intersect at the bifurcation surface $B_2$, while the blue branches denote $\mathcal H_1^\pm$, which intersect at $B_1$. The red curve represents the completed static slice $\mathcal D$, which connects the two bifurcation surfaces and satisfies $\partial\mathcal D=B_1\cup B_2$.}
\label{fig:bifurcation}
\end{figure}

%\sout{Using the Killing equation $\nabla_{(\mu}\xi_{\nu)}=0$ and the Bianchi identity, one obtains}
%\begin{equation}
%   \xcancel{ \nabla_{\mu}\nabla_{\nu}\xi_{\rho}=R_{\rho\nu\mu\sigma}\xi^{\sigma},}
%\end{equation}
%\sout{where $R_{\rho\nu\mu\sigma}$ is the Riemann curvature tensor of the spacetime.}

To analyse the sign of surface gravity on $\mathcal H_{1}$ and $\mathcal H_{2}$, we introduce a future-directed ingoing auxiliary null vector $l^{\mu}$, chosen to point toward the black hole interior and normalized by $\xi^{\mu}l_{\mu}=-1$ on the Killing horizon. From Eq.~\eqref{SG}, the surface gravity is then given by
\begin{equation}
 \kappa=-\frac12l^\mu\nabla_\mu N^2.
 \label{eq:kappa-N2}
\end{equation}
Because the horizons are the simple zeros of $N^2$ and $N^2$ is positive inside static region $\mathcal{V}$, its inward derivative has opposite signs at the two crossings. Specifically, the sign changes of $N^2$ across two consecutive simple zeros give
\begin{equation}
 \kappa_1<0,
 \qquad
 \kappa_2>0,
 \label{eq:kappa-signs}
\end{equation}
%\sout{In the next section, we will show that, under the SEC, the sign of the surface gravity on the inner horizon leads to a contradiction.} 
The geometric analysis and conclusions above is independent of any gravitational theory. In the following sections, we show that the sign of surface gravity leads to a contradiction in both Einstein and Lovelock gravity under certain conditions.

\section{Einstein gravity: lapse equation and Komar charge}
\label{sec:einstein}

%\sout{We first review the lapse argument for the static Einstein gravity inner horizon constraint,}  
We first review the constraint on the number of inner Killing horizons derived from the lapse equation in Einstein gravity, and then develop an alternative derivation in terms of the Komar charge. 
%\sout{The two approaches are in fact equivalent, while the Komar formulation offers a complementary physical perspective in terms of gravitational charge.} 
The Komar charge provides an integral reformulation of the local lapse equation, and this relation is easy to be applied into other gravity theories. In the next section, we show that the Komar charge is precisely the gravitational Noether charge associated with the Killing vector in Einstein gravity, providing a natural starting point for its extension to Lovelock gravity.

Consider the trace-reversed form of Einstein's equation
\begin{equation}\label{eq:einstein-trace-reverse}
    R_{\mu\nu}=8\pi\left(T_{\mu\nu}-\frac{1}{d-1}Tg_{\mu\nu} \right),
\end{equation}
where $R_{\mu\nu}$ is $d+1$ dimensional Ricci tensor on the spacetime $\mathcal{M}$, $T_{\mu\nu}$ is the energy momentum tensor and $T=g^{\mu\nu}T_{\mu\nu}$ is its trace. The discussion below proceeds in three parts. We first derive the lapse equation and use it to recover the static inner horizon constraint. We then reformulate the same obstruction in terms of the Komar charge balance law, and finally show explicitly that the lapse equation and Komar  charge balance are two equivalent representations of the same geometric relation.

\subsection{lapse equation}\label{IIA}
For the static metric \eqref{metric}, staticity implies that the extrinsic curvature of the constant-t slices $\Sigma_{t}$ vanishes, i.e., $K_{ab}=0$. The ADM decomposition of Einstein's equation on $\mathcal D$ yields the following two independent equations:
\begin{equation}\label{Const1}
    ^{(d)}R=16\pi\rho
\end{equation}
and
\begin{equation}\label{Evolua1}
    ^{(d)}R_{ab}= N^{-1}D_aD_bN+8\pi\left[\mathcal{T}_{ab}+\frac{h_{ab}}{d-1}(\rho-\mathcal{T}) \right].
\end{equation}
Here $D_{a}$ and $^{(d)}R_{ab}$ are the induced covariant derivative operator associated with $h_{ab}$ and $d$-dimensional Ricci tensor on hypersurface $\Sigma_{t}$, respectively. $^{(d)}R=h^{ab}R_{ab}$ is the scalar curvature on $\Sigma_{t}$. $\rho=T_{\mu\nu}n^{\mu}n^{\nu}$ is the energy density, $\mathcal{T}_{ab}$ is the projection of stress tensor $T_{\mu\nu}$ on $\Sigma_{t}$ and $\mathcal{T}=h^{ab}\mathcal{T}_{ab}$ is its trace. Combining Eq.~\eqref{Const1} and Eq.~\eqref{Evolua1}, it can be obtained that
\begin{equation}\label{D2N}
    D^2N=8\pi N \left[\rho+\frac{T}{d-1} \right]=NR_{\mu\nu}n^{\mu}n^{\nu}.
\end{equation}
The SEC states that the bracket in Eq.~\eqref{eq:einstein-trace-reverse} satisfies $[T_{\mu\nu}-Tg_{\mu\nu}/(d-1)]v^{\mu}v^{\nu}\ge0$ for any timelike vector $v^{\mu}$. Hence, through Einstein's equation~\eqref{eq:einstein-trace-reverse}, the SEC implies the timelike convergence condition (TCC), $R_{\mu\nu}v^\mu v^\nu\geq 0$ for any timelike vector $v^\mu$ and
\begin{equation}
 D^2N^2=2N^2R_{\mu\nu}n^\mu n^\nu+2h^{ab}D_aN D_{b}N\geq0.
 \label{eq:N2-subharmonic}
\end{equation}
It shows that, on a compact region $\mathcal{D}$, $N^2$ is subharmonic and $N^2=0$ on the boundary $\partial\mathcal{D}$. The maximum principle then implies $N^2\leq0$, contradicting the fact that $N^2>0$ inside the $\mathcal{D}$. Hence, under the SEC, a static region $\mathcal V$ bounded by two consecutive nondegenerate inner Killing horizons cannot exist.

This immediately constrains the number of inner horizons. Indeed, if two or more nondegenerate inner Killing horizons existed on the same black hole interior branch, there would necessarily exist a static region bounded by two inner horizons. Therefore, under the SEC, there can be at most one nondegenerate inner Killing horizon inside the event horizon. This is precisely the compact static inner horizon constraint derived in Ref.~~\cite{yang2022horizons}. We next derive the same conclusion from a different perspective, using the Komar integral and the associated Komar charge balance law.

\subsection{Komar charge balance law}

For a Killing vector field $\xi^{\mu}$, the antisymmetric tensor $\nabla^{\mu} \xi^{\nu}$ defines a Komar charge on any closed 2-codimensional surface $S$~\cite{Komar:1958wp},
\begin{equation}
 \mathcal Q_{\rm EH}[S]
 =-\frac{1}{16\pi}\int_S \nabla^\mu\xi^\nu\,\dd S_{\mu \nu}.
 \label{eq:EH-charge}
\end{equation}
Here $\dd S_{\mu \nu}=\epsilon_{\mu \nu}\dd S$ is area $2$-form of surface $S$. This Komar charge coincides with the gravitational Noether charge in Einstein gravity introduced in the next section.

We next apply this Komar charge on the two inner horizons. As discussed in Sec.~\ref{sec:pocket}, the static slice $\mathcal{D}$ terminates at the bifurcation surfaces $B_1$ and $B_2$ associated with the two nondegenerate Killing horizons. On each bifurcation surface, $\nabla_a\xi_b=\kappa\eps_{ab}$, where $\eps_{\mu\nu}\eps^{\mu\nu}=-2$. Substitution into the Komar charge gives
\begin{equation}
 \mathcal Q_{\rm EH}[B]
 =\frac{\kappa}{2\pi}\frac{A_H}{4},
 \label{eq:EH-horizon-charge}
\end{equation}
where $A_{H}$ is the area of $d-1$ dimensional bifurcation surface $B$. Since the areas of regular horizon sections are strictly positive, the sign of surface gravity Eq.~\eqref{eq:kappa-signs} implies the Komar charges on these two horizon sections satisfy
\begin{equation}\label{Qsign}
    \begin{aligned}
         &\mathcal Q_{\rm EH}[B_{1}]=\frac{\kappa_{1}}{2\pi}\frac{A_{H_{1}}}{4}<0,\\
         &\mathcal Q_{\rm EH}[B_{2}]=\frac{\kappa_{2}}{2\pi}\frac{A_{H_{2}}}{4}>0
    \end{aligned}
\end{equation}

Applying the Stokes' theorem, the difference between the Komar charges can then be expressed as bulk integral over $\mathcal{D}$:
\begin{equation}\label{eq:EH-bulk-second-derivative}
     \mathcal Q_{\rm EH}[B_{1}]- \mathcal Q_{\rm EH}[B_{2}]=-\frac{1}{8\pi}\int_{\mathcal{D}}\nabla_{\nu}\nabla^\mu\xi^\nu\,\dd \Sigma_{\mu},
\end{equation}
where $\dd \Sigma_{\mu}=-n_{\mu}\dd \Sigma$ is the $d$-dimensional volume element on $\mathcal{D}$. The second derivative of a Killing vector is fixed by the spacetime curvature. Using the Killing equation $\nabla_{(\mu}\xi_{\nu)}=0$ together with the cyclic identity $R_{\mu[\nu\rho\sigma]}=0$, one has
\begin{equation}
    \nabla_\mu\nabla_\nu\xi_\rho
    =
    R_{\rho\nu\mu\sigma}\xi^\sigma .
    \label{eq:Killing-second-derivative}
\end{equation}
Combining with $\xi^{\mu}=Nn^{\mu}$ and substituting Eq.~\eqref{eq:Killing-second-derivative} into
Eq.~\eqref{eq:EH-bulk-second-derivative} yields
\begin{equation}\label{eq:EH-bulk-second-derivative2}
     \mathcal Q_{\rm EH}[B_{1}]- \mathcal Q_{\rm EH}[B_{2}]=\frac{1}{8\pi}\int_{\mathcal{D}}NR_{\mu\nu}n^{\mu}n^{\nu}\,\dd \Sigma,
\end{equation}
In static region $N>0$ and the SEC therefore imply $ \mathcal Q_{\rm EH}[B_{1}]- \mathcal Q_{\rm EH}[B_{2}]\ge0$, in contradiction with the sign of Eq.~\eqref{Qsign}. The assumed static region $\mathcal{V}$ bounded by two consecutive nondegenerate inner Killing horizons is therefore impossible. Thus, the Komar argument reproduces the same inner horizon constraint: under the SEC, a static region bounded by two nondegenerate inner Killing horizons cannot exist, and hence there can be at most one nondegenerate inner Killing horizon inside the event horizon.

More importantly, the Komar charge balance law provides a different physical interpretation of this obstruction. The surface gravities fix the signs of the Komar charges on the two horizon boundaries and hence fix the sign of their difference, while the SEC fixes the sign of the bulk integral relating them. For two inner horizons, these two sign requirements are incompatible.

\subsection{Relation between the lapse and Komar arguments}

The Komar construction above is not an independent trick unrelated to the lapse argument. In a static spacetime, the two approaches are different representations of the same geometric identity. The lapse formulation keeps the relation in a local differential form, whereas the Komar formulation converts it into a bulk--boundary balance.

To see this explicitly, recall that in the static region $\xi^\mu=Nn^\mu$. Using the orientation convention introduced in Eq.~\eqref{eq:binormal}, the boundary element on
$\partial\mathcal D$ is $\dd S{\mu\nu}=2n_{[\mu}s_{\nu]}\,\dd S$. Since $\nabla^\mu\xi^\nu$ is antisymmetric, its contraction with the boundary element gives
\begin{equation}
\begin{aligned}
    \nabla^\mu\xi^\nu \dd S{\mu\nu}
    &=
    -2n_\nu s_\mu\nabla^\mu\xi^\nu\,\dd S
    \\
    &=
    -N^{-1} s^\mu \nabla_\mu (\xi^2)\,\dd S \\
    &=2s^\mu \partial_\mu N\,\dd S.
\end{aligned}
\end{equation}
Integrating over the boundary and applying the Stokes' theorem therefore yields
\begin{equation}
    \int_{\partial\mathcal D}
    \nabla^\mu\xi^\nu\,\dd S{\mu\nu}
    =
    2\int_{\partial\mathcal D}
    s^\mu D_\mu N\,\dd S
    =
    2\int_{\mathcal D}
    D^2N\,\dd \Sigma .
    \label{eq:komar-lapse-equivalence}
\end{equation}
It shows that the bulk representation of the Komar charge is precisely obtained by integrating the lapse equation~\eqref{D2N}. The lapse equation provides a pointwise differential description in the static region, whereas the Komar formulation converts its integral over the static region into a Komar charge balance.

It is important that Eq.~\eqref{eq:komar-lapse-equivalence} is geometric: its derivation uses Killing symmetry and the Stokes' theorem, but not specific gravitational field equations. However, the expression for $D^2 N$ in terms of the matter and curvature sources depends on the specific gravitational theory. In a modified gravity theory, one may formally rewrite the field equations in an Einstein-like form by absorbing the higher-curvature contributions into an effective energy-momentum tensor,
\begin{equation}
    G_{\mu\nu}=
    R_{\mu\nu}-\frac{1}{2}g_{\mu\nu}R
    =
    8\pi \,T^{\rm eff}_{\mu\nu}.
\end{equation}
If the effective energy-momentum tensor, which contains the higher-curvature contributions, satisfies the corresponding effective SEC, then the same lapse argument can be applied and leads formally to the same static inner horizon constraint as in Einstein gravity. This conclusion, however, follows from an Einstein-like rearrangement of the field equation: all curvature terms other than the Einstein tensor $G_{\mu\nu}$ are simply moved to the effective stress-tensor side. Although this rewriting is mathematically legitimate, it obscures the intrinsic bulk-boundary structure of the higher-curvature theory. In particular, the Komar form $\nabla^{\mu}\xi^{\nu}$, which represents the gravitational Noether potential in Einstein gravity, is no longer the Noether potential associated with the higher-curvature action.

A direct ADM decomposition of a higher-curvature theory is also possible in principle, but the resulting constraint and evolution equations generally involve considerably more complicated combinations of intrinsic and extrinsic curvatures. In this form, the simple scalar equation for the lapse that underlies the proof in Einstein gravity is no longer manifest. The gravitational Noether charge formulation provides a more intrinsic alternative. Instead of moving the higher-curvature terms to the effective stress-tensor side of the field equations, one can work directly with the gravitational Noether charge determined by the gravitational action and examine the corresponding bulk-boundary balance. This is the route we pursue below.

%%%%%%%%%%%%%%%%%%%%%%%%%%%%%%%%%%%%%%%%%%%%%%%%%%%%%%%%%%%%%%%%%%%%%

\section{From Komar charge to the gravitational Noether Charge}
\label{sec:wald-charge}

To extend the boundary-charge formulation beyond Einstein gravity, we first identify the gravitational charge associated with diffeomorphism invariance in a general theory. We therefore begin with a diffeomorphism invariant action and review the corresponding Iyer-Wald Noether current, potential, and charge~\cite{Wald:1993nt,Iyer:1994ys}. We then specialize to Lovelock gravity, for which the Noether potential takes a particularly simple form.

Consider a diffeomorphism invariant theory whose gravitational Lagrangian depends algebraically on the metric $g_{\mu\nu}$ and the Riemann tensor $R_{\mu\nu\rho\sigma}$, but not on derivatives of the curvature,
\begin{equation}
    I[g,\psi]
    =
    \frac{1}{16\pi}
    \int_{\mathcal M}
    \dd^D x\,\sqrt{-g}\,
    L_{\rm grav}
    \bigl(g_{\mu\nu},R_{\mu\nu\rho\sigma}\bigr)
    +
    I_{\rm m}[g,\psi],
    \label{eq:general-action}
\end{equation}
where $D=d+1$ is the spacetime dimension, $I_{\mathrm m}$ is the action of the matter fields and $\psi$ collectively denotes the matter fields. It is convenient to introduce the curvature response tensor
\begin{equation}
    P^{\mu\nu\rho\sigma}
    :=
    \frac{\partial L_{\rm grav}}
    {\partial R_{\mu\nu\rho\sigma}} .
    \label{eq:P-definition}
\end{equation}
Since the Riemann tensor possesses its usual algebraic symmetries, $ P^{\mu\nu\rho\sigma}$ inherits the same symmetries,
\begin{equation}
    P^{\mu\nu\rho\sigma}
    =
    -P^{\nu\mu\rho\sigma}
    =
    -P^{\mu\nu\sigma\rho}
    =
    P^{\rho\sigma\mu\nu},
    \quad
    P^{\mu[\nu\rho\sigma]}=0 .
    \label{eq:P-symmetry}
\end{equation}
To obtain the field equations and the gravitational Noether charge, we first consider a variation of the metric. The variation of the gravitational Lagrangian separates naturally into a bulk term, which determines the equations of motion, and a total-divergence term, which will enter the construction of the Noether current:
\begin{equation}
    \delta\!\left(\sqrt{-g}\,L_{\rm grav}\right)
    =
    \sqrt{-g}\,
    E_{\mu\nu}\,\delta g^{\mu\nu}
    +
    \sqrt{-g}\,
    \nabla_\mu\Theta^\mu(g,\delta g),
    \label{eq:variation-L}
\end{equation}
Here $\Theta^\mu(g,\delta g)$ is the symplectic-potential current. Including the variation of the matter action, the field equation is
\begin{equation}
\begin{aligned}
    E_{\mu\nu}
    &=
    P_{\mu}{}^{\rho\sigma\lambda}
    R_{\nu\rho\sigma\lambda}
    -
    \frac12 L_{\rm grav}g_{\mu\nu}
    -
    2\nabla^\rho\nabla^\sigma
    P_{\mu\rho\sigma\nu}
    \\&=
    8\pi \,T_{\mu\nu}.
    \end{aligned}
    \label{eq:general-field-equation}
\end{equation}
For an infinitesimal diffeomorphism generated by a vector field $\xi^\mu$ (it need not be a Killing vector field), the metric variation is $\delta_\xi g_{\mu\nu}=\mathcal L_\xi g_{\mu\nu}$. From Eq.~\eqref{eq:variation-L}, diffeomorphism invariance then implies the associated gravitational Noether current
\begin{equation}\label{current}
    J^\mu_\xi
    =\Theta^\mu(g,\mathcal L_\xi g)
    -
    \xi^\mu L_{\rm grav}
\end{equation}
is conserved on shell, i.e., $\nabla_\mu J^\mu_\xi=0$. The Poincar\'e lemma then implies that, locally, this current can be written as the divergence of an antisymmetric Noether potential $J^\mu_\xi=2\nabla_\nu q_\xi^{\nu\mu}$. For a diffeomorphism invariant gravitational Lagrangian $L_{\rm grav}\bigl(g_{\mu\nu},R_{\mu\nu\rho\sigma}\bigr)$ containing no derivatives of the curvature, the corresponding Noether potential can be written as (see Appendix~\ref{app:noether-potential} for a derivation)
\begin{equation}
    q_\xi^{\mu\nu}
    =
        P^{\mu\nu\rho\sigma}\nabla_\rho\xi_\sigma
        +
        2\xi_\rho\nabla_\sigma
        P^{\mu\nu\rho\sigma}.
    \label{eq:general-IW-potential}
\end{equation}
The associated gravitational Noether charge on a closed
codimension-two surface $S$ is then
\begin{equation}
    \mathcal Q_\xi[S]
    :=
    -\frac{1}{16\pi}
    \int_S
    q_\xi^{\mu\nu}\,
    \dd S_{\mu\nu}.
    \label{eq:general-IW-charge}
\end{equation}
%\sout{In contrast to an Einstein-like rewriting in terms of an effective energy-momentum tensor,} 
Rather than absorbing the higher-curvature contributions into an effective energy-momentum tensor, the gravitational charge used here is derived directly from the original higher-curvature action. 

We next specialize this charge to Lovelock gravity~\cite{Padmanabhan:2013xyr,Chakraborty:2014joa}. Its gravitational Lagrangian is a finite sum
\begin{equation}
    L_{\rm grav}
    =
    \sum_{m=0}^{K} c_m L_m ,
    \label{eq:lovelock-L}
\end{equation}
where $c_m$ are the corresponding coupling constants. The $m$-th Lovelock density is defined by
\begin{equation}
    L_m
    =
    \frac{1}{2^m}
    \delta^{
        \mu_1\nu_1\cdots\mu_m\nu_m
    }_{
        \rho_1\sigma_1\cdots\rho_m\sigma_m
    }
    \prod_{j=1}^{m}
    R^{\rho_j\sigma_j}{}_{\mu_j\nu_j}.
    \label{eq:lovelock-density}
\end{equation}
Here $\delta$ denotes the generalized Kronecker delta, which is totally antisymmetric in both sets of indices. Owing to this complete antisymmetrization, $L_m$ vanishes identically for $D<2m$. In the critical dimension $D=2m$, its integral is proportional to the Euler characteristic and therefore does not contribute to the local metric field equations. Hence, for a fixed spacetime dimension, only finitely many Lovelock terms contribute nontrivially to the dynamics. The first few terms have familiar interpretations:
\begin{itemize}
    \item {$m=0$:} $L_{0}=1$, which gives the cosmological constant term $c_{0}\int\dd^{D}x\sqrt{-g}$.
    \item {$m=1$:} $L_{1}=R$, the Einstein-Hilbert density.
    \item {$m=2$:} $L_{2}=R_{\mu\nu\rho\sigma}R^{\mu\nu\rho\sigma}-4R_{\mu\nu}R^{\mu\nu}+R^2$, the Gauss-Bonnet density.
    \item {$m\ge3$:} the higher-order Lovelock densities.
\end{itemize}

The complete antisymmetrization of the generalized Kronecker delta, together with the differential Bianchi identity for the Riemann tensor, implies
\begin{equation}
    \nabla_\mu P^{\mu\nu\rho\sigma}=0 .
    \label{eq:lovelock-divP}
\end{equation}
This property is special: it removes the $\nabla^\rho\nabla^\sigma P_{\mu\rho\sigma\nu}$ term from the field equation~\eqref{eq:general-field-equation} and simultaneously removes the $\xi_\rho\nabla_\sigma P^{\mu\nu\rho\sigma}$ term from the Noether potential~\eqref{eq:general-IW-potential}. Define the Ricci-like tensor $\mathcal{R}_{\mu\nu}:= P_{\mu}{}^{\rho\sigma\lambda}R_{\nu\rho\sigma\lambda}$. The Lovelock field equation therefore reduces to
\begin{equation}
    \mathcal{R}_{\mu\nu}
    -
    \frac{1}{2}L_{\rm grav}g_{\mu\nu}
    =
    8\pi \,T_{\mu\nu},
    \label{eq:lovelock-field-equation}
\end{equation}
whereas the gravitational Noether potential takes the simple
Komar-like form
\begin{equation}
    q_\xi^{\mu\nu}
    =
    P^{\mu\nu\rho\sigma}
    \nabla_\rho\xi_\sigma .
    \label{eq:lovelock-noether-potential}
\end{equation}
Consequently,
\begin{equation}
    \mathcal Q_\xi[S]
    =
    -\frac{1}{16\pi}
    \int_S
    P^{\mu\nu\rho\sigma}
    \nabla_\rho\xi_\sigma\,
    \dd S_{\mu\nu}.
    \label{eq:lovelock-Noether charge}
\end{equation}
This is the precise sense in which the Komar construction extends naturally within the Lovelock class: the Komar integrand $\nabla^\mu\xi^\nu$ in Einstein gravity is replaced by the curvature-response-weighted quantity
$P^{\mu\nu\rho\sigma}\nabla_\rho\xi_\sigma$, with $P^{\mu\nu\rho\sigma}$ determined uniquely by the gravitational Lagrangian.

The Einstein gravity result is recovered immediately from the first
Lovelock order. For the Einstein-Hilbert density $L_1=R$,
\begin{equation}
    P_{\rm EH}^{\mu\nu\rho\sigma}
    =
    \frac12
    \left(
        g^{\mu\rho}g^{\nu\sigma}
        -
        g^{\mu\sigma}g^{\nu\rho}
    \right).
    \label{eq:P-EH}
\end{equation}
For the Killing field $\xi^{\mu}$,
\begin{equation}
    P_{\rm EH}^{\mu\nu\rho\sigma}
    \nabla_\rho\xi_\sigma
    =
    \nabla^\mu\xi^\nu ,
\end{equation}
and Eq.~\eqref{eq:lovelock-Noether charge} becomes
\begin{equation}
    \mathcal Q_\xi[S]
    =
    -\frac{1}{16\pi }
    \int_S
    \nabla^\mu\xi^\nu\,
    \dd S_{\mu\nu},
\end{equation}
which is precisely the Komar charge used in the previous section.
The Einstein Komar charge is therefore not a separate construction
from the Wald formalism: it is the $m=1$ Lovelock realization of the
gravitational Noether charge.

%%%%%%%%%%%%%%%%%%%%%%%%%%%%%%%%%%%%%%%%%%%%%%%%%%%%%%%%%%%%%%%%%%%%%%%

\section{Wald Entropy and the Noether charge Balance Law}

In this section, we establish the relation between horizon Noether charges and the bulk curvature in Lovelock gravity. We first investigate the gravitational Noether charge on a bifurcation surface and show that it is given by the product of the surface gravity $\kappa$ and the Wald entropy $S_{\mathrm W}$. We then use the divergence identity of the Lovelock Noether potential together with Stokes' theorem to relate the difference between the two horizon charges to a bulk integral of the Ricci-like tensor $\mathcal{R}_{\mu\nu}$. This yields the Lovelock generalization of the Einstein Komar balance law.

\subsection{Wald entropy}
The Noether charge acquires a particularly simple and physically transparent form when evaluated on a bifurcation surface $B$. To make this relation explicit, we first introduce the Wald entropy associated with $B$, defined by~\cite{Wald:1993nt,Iyer:1994ys}
\begin{equation}
    S_{\rm W}[B]
    =
    -\frac{1}{8}
    \int_B
    P^{\mu\nu\rho\sigma}
    \epsilon_{\mu\nu}\epsilon_{\rho\sigma}\,
    \dd S .
    \label{eq:wald-entropy}
\end{equation}
Using $\dd\Sigma_{\mu\nu}=\epsilon_{\mu\nu}\dd S$, together with the expression for the surface gravity on the bifurcation surface given in Eq.~\eqref{eq:bifurcation-kappa}, the Noether charge integrand Eq.~\eqref{eq:lovelock-Noether charge} reduces to the same binormal contraction appearing in the Wald entropy. Hence
\begin{align}
    \mathcal Q_\xi[B]
    &=
    -\frac{\kappa}{16\pi}
    \int_B
    P^{\mu\nu\rho\sigma}
    \epsilon_{\mu\nu}\epsilon_{\rho\sigma}\,
    \dd A
    \nonumber\\
    &=
    \frac{\kappa}{2\pi}\,S_{\rm W}[B].
    \label{eq:charge-kappa-entropy}
\end{align}
Thus the gravitational Noether charge on the horizon section is the product of the surface gravity factor and the Wald entropy. For the Lovelock Lagrangian $L_{\rm grav}=\sum_{m=0}^{K}c_m L_m$, Eq.~\eqref{eq:wald-entropy} can be written entirely in terms of the
intrinsic geometry of the bifurcation surface~\cite{Jacobson:1993xs}:
\begin{equation}
    S_{\rm W}[B]
    =
    \frac{1}{4}
    \sum_{m=1}^{K}
    m c_m
    \int_B
    \widetilde L_{m-1}\,
    \dd S .
    \label{eq:lovelock-wald-entropy}
\end{equation}
Here $\widetilde L_{m-1}$ denotes the $(m-1)$-th Lovelock density constructed from the intrinsic $(d-1)$-dimensional Riemann tensor of the induced metric on $B$. The $m=1$ contribution is simply $S_{\rm W}^{(1)}=A_{H}/4$, so the Bekenstein-Hawking entropy is the first nontrivial Lovelock contribution to the Wald entropy. Applying Eq.~\eqref{eq:charge-kappa-entropy} to the two bifurcation surfaces $B_1$ and $B_2$, we therefore have
\begin{equation}
    \mathcal Q_\xi[B_i]
    =
    \frac{\kappa_i}{2\pi}
    S_{\rm W}[B_i],
    \qquad i=1,2 .
    \label{eq:two-horizon-charges}
\end{equation}

\subsection{Noether charge balance law}

We now compare the Noether charges on the two bifurcation surfaces $B_1$ and $B_2$. Taking their difference and applying Stokes' theorem to the completed static hypersurface $\mathcal{D}$ gives
\begin{equation}
     \mathcal Q_\xi[B_1]- \mathcal Q_\xi[B_2]=\frac{1}{8\pi}
    \int_\mathcal D
    n_{\mu}\nabla_{\nu}(P^{\mu\nu\rho\sigma}
    \nabla_\rho\xi_\sigma)\,
    \dd \Sigma.
\end{equation}
Using the Lovelock identity $\nabla_{\mu}P^{\mu\nu\rho\sigma}=0$, together with the Killing identity Eq.~\eqref{eq:Killing-second-derivative}, we obtain
\begin{equation}
\begin{aligned}
    \nabla_\mu
    \left(
        P^{\mu\nu\rho\sigma}
        \nabla_\rho\xi_\sigma
    \right)
    &=
    P^{\mu\nu\rho\sigma}
    \nabla_\mu\nabla_\rho\xi_\sigma
    \nonumber\\
    &=
    -\mathcal R^\nu{}_{\lambda}\xi^\lambda .
    \label{eq:lovelock-killing-divergence}
\end{aligned}
\end{equation}
Therefore, the Noether charge balance law
\begin{equation}
    \mathcal Q_\xi[B_1]
    -
    \mathcal Q_\xi[B_2]
    =
    \frac{1}{8\pi}
    \int_{\mathcal D}
    N\,
    \mathcal R_{\mu\nu}
    n^\mu n^\nu\,
    \dd \Sigma .
    \label{eq:lovelock-charge-balance}
\end{equation}
Here the static decomposition $\xi^\mu=Nn^\mu$ is used. Finally, substituting the charge expression on horizon sections Eq.~\eqref{eq:two-horizon-charges} into Eq.~\eqref{eq:lovelock-charge-balance} gives
\begin{equation}
    \frac{\kappa_1}{2\pi}
    S_{\rm W}[B_1]
    -
    \frac{\kappa_2}{2\pi}
    S_{\rm W}[B_2]
    =
    \frac{1}{8\pi}
    \int_{\mathcal D}
    N\,
    \mathcal R_{\mu\nu}
    n^\mu n^\nu\,
    \dd \Sigma .
    \label{eq:wald-balance-law}
\end{equation}
This is the Lovelock counterpart of the Einstein Komar charge balance law. It relates the horizon data on the two boundaries to an integral of the Ricci-like curvature over the static region.

%%%%%%%%%%%%%%%%%%%%%%%%%%%%%%%%%%%%%%%%%%%%%%%%%%%%%%%%%%%%%%%%%%%%%%
\section{TCC-like Curvature Condition and the inner horizon Constraint}
\label{sec:curvature-condition}

The Noether charge balance law derived in the previous section shows that the Ricci-like tensor $\mathcal R_{\mu\nu}$ is the curvature quantity governing the bulk contribution between the two horizon boundaries. In Einstein gravity, the corresponding sign condition is the TCC, which is implied by the SEC. This suggests introducing a Lovelock analogue in which the Ricci tensor is replaced by the Ricci-like tensor selected by the Noether charge balance law.
\begin{definition}
In Lovelock gravity, a spacetime is said to satisfy the TCC-like curvature condition if
\begin{equation}
    \mathcal R_{\mu\nu}v^\mu v^\nu \ge 0
    \label{eq:TCC-like-condition}
\end{equation}
for every timelike vector $v^\mu$.
\end{definition}
Unlike the Einstein TCC, the corresponding quantity $\mathcal R_{\mu\nu}v^\mu v^\nu$ in Lovelock gravity receives contributions not only from the matter energy-momentum tensor but also from the higher-curvature terms. This can be seen explicitly from Eq.~\eqref{eq:lovelock-field-equation}. Since the $m$-th Lovelock density $L_m$ is homogeneous of degree $m$ in the Riemann tensor, the Euler's theorem for homogeneous functions gives
\begin{equation}
    P_{(m)}^{\mu\nu\rho\sigma}
    R_{\mu\nu\rho\sigma}
    =\frac{\partial L_{m}}{\partial R_{\mu\nu\rho\sigma}}R_{\mu\nu\rho\sigma}=
    m L_m .
\end{equation}
Here $P_{(m)}^{\mu\nu\rho\sigma}$ denotes the curvature response tensor associated with the $m$-th Lovelock density $L_m$. The trace of the Ricci-like tensor therefore is $\mathcal R:=g^{\mu\nu}\mathcal R_{\mu\nu}=\sum_{m=0}^{K} m c_m L_m$. Taking the trace of the Lovelock field equations Eq.~\eqref{eq:lovelock-field-equation} then yields
\begin{equation}
    \sum_{m=0}^{K}
    c_m
    \left(
        m-\frac{d+1}{2}
    \right)
    L_m
    =
    8\pi  T .
    \label{eq:lovelock-trace}
\end{equation}
Combining $L_{\rm grav}=\sum_{m=0}^{K}c_mL_m$ with Eq.~\eqref{eq:lovelock-trace}, the Lovelock field equations Eq.~\eqref{eq:lovelock-field-equation} can be rewritten as
\begin{equation}
    \mathcal R_{\mu\nu}
    =
    8\pi
    \left(
        T_{\mu\nu}
        -\frac{1}{d-1}g_{\mu\nu}T
    \right)
    +
    \frac{g_{\mu\nu}}{d-1}
    \sum_{m=0}^{K}
    (m-1)c_m L_m .
    \label{eq:ricci-like-decomposition}
\end{equation}
This expression may be viewed as the Lovelock counterpart of the trace-reversed Einstein equation. Unlike in Einstein gravity, its right-hand side contains not only the matter combination but also an explicit contribution from the higher-curvature terms. The TCC-like curvature
condition is therefore not a simple matter energy condition.

With the TCC-like curvature condition specified, we can now establish the inner horizon constraint in Lovelock gravity.
\begin{theorem}
Consider a static spacetime in Lovelock gravity with Killing vector $\xi^\mu$ and regular Killing horizons with compact spatial cross sections. If
\begin{enumerate}
    \item[(a)] the TCC-like curvature condition holds;
    \item[(b)] the Wald entropy of every Killing horizon associated
    with $\xi^\mu$ is positive.
\end{enumerate}
Then, inside the event horizon, there exists at most one nondegenerate Killing horizon associated with $\xi^\mu$.
\end{theorem}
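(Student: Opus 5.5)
We argue by contradiction, following the Komar-type argument of Sec.~\ref{sec:einstein} with the Einstein charge replaced by the Lovelock Noether charge \eqref{eq:lovelock-Noether charge}.

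Suppose that inside the chosen nondegenerate event horizon, on the same interior branch, there are two or more nondegenerate Killing horizons of $\xi^\mu$. Ordering them by distance from the event horizon, we pick two consecutive ones, $\mathcal H_1$ (outer) and $\mathcal H_2$ (inner). By the construction of Sec.~\ref{sec:pocket}, they bound a static region $\mathcal V$ in which $N^2>0$, with the sign structure \eqref{eq:pocket}. Because both horizons are regular with compact cross sections and have nonzero surface gravity, the R\'acz--Wald extension applies. We therefore complete the slice $\mathcal D$ to a compact hypersurface with $\partial\mathcal D=B_1\cup B_2$. The purely geometric sign argument then gives \eqref{eq:kappa-signs}: $\kappa_1<0$ and $\kappa_2>0$. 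This step uses no field equations.

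Next we evaluate both sides of the balance law \eqref{eq:wald-balance-law}.
\begin{itemize}
\item \emph{Left-hand side.} By Eq.~\eqref{eq:two-horizon-charges} and assumption (b), $S_{\rm W}[B_i]>0$. Hence $\mathcal Q_\xi[B_1]=\kappa_1S_{\rm W}[B_1]/2\pi<0$ and $\mathcal Q_\xi[B_2]=\kappa_2 S_{\rm W}[B_2]/2\pi>0$, so the left-hand side is strictly negative.
\item \emph{Right-hand side.} In the interior of $\mathcal D$ we have $N>0$, and $n^\mu$ is timelike. Assumption (a) gives $\mathcal R_{\mu\nu}n^\mu n^\nu\ge0$ there. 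The integrand is therefore nonnegative, and the right-hand side is $\ge0$.
\end{itemize}
This contradiction rules out two consecutive nondegenerate Killing horizons inside the event horizon, which proves the theorem.

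The bulk integral needs some care, since we only need it to be well defined and nonnegative. At the boundary $N\to0$ and $n^\mu$ degenerates, but the integrand is really $-n_\nu\mathcal R^\nu{}_\lambda\xi^\lambda$ times the volume element. This combination is smooth up to $B_i$, because $\xi^\mu$ is smooth and vanishes there, so no boundary contribution is lost.

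The main obstacle is justifying the sign bookkeeping and the Stokes step in \eqref{eq:wald-balance-law} on the completed slice. Three things have to be checked:
\begin{itemize}
\item The orientation of the binormal $\epsilon_{\mu\nu}$, with $s^a$ pointing toward the event horizon, must be compatible on $B_1$ and $B_2$ with the outward normal of $\mathcal D$ used in Stokes' theorem. This is what makes the boundary terms combine as $\mathcal Q[B_1]-\mathcal Q[B_2]$.
\item The identity $\nabla_\mu P^{\mu\nu\rho\sigma}=0$ must be used so that the Noether potential is exactly $P\nabla\xi$.
\item The relation $\nabla_\mu\xi_\nu=\kappa\epsilon_{\mu\nu}$ must hold on each $B_i$, with the same sign convention that yields $\kappa_1<0<\kappa_2$ through \eqref{eq:kappa-N2}.
\end{itemize}
I would first verify these conventions in the Einstein case. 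There \eqref{eq:komar-lapse-equivalence} ties them to $D^2N$ and the flux $s^\mu D_\mu N$, and the known lapse argument fixes the signs. The Lovelock case then differs only by the weight $P^{\mu\nu\rho\sigma}$, which enters both the horizon charges, through $S_{\rm W}$, and the bulk term, through $\mathcal R_{\mu\nu}$.
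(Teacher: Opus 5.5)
Your proposal is correct and follows the paper's proof essentially step for step. Assuming two consecutive nondegenerate inner horizons, you combine $\kappa_1<0<\kappa_2$ with $S_{\rm W}[B_i]>0$ in Eq.~\eqref{eq:two-horizon-charges} to get $\mathcal Q_\xi[B_1]-\mathcal Q_\xi[B_2]<0$, and use the TCC-like condition with $N>0$ in the balance law \eqref{eq:lovelock-charge-balance} to get the opposite inequality. The orientation and sign conventions you flag are ones the paper treats as already settled in Sec.~\ref{sec:pocket} and in its derivation of the balance law. Calibrating them against the Einstein lapse argument, as you propose, is a sound safeguard, because the contradiction does not depend on those conventions: it pits a boundary term whose sign is fixed by $N\to0^{+}$ on $\partial\mathcal D$ and $S_{\rm W}>0$ against a nonnegative bulk integral.
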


\textit{Proof:} To the contrary, suppose that there exist two consecutive nondegenerate inner Killing horizons associated with the Killing vector $\xi^\mu$. By Eq.~\eqref{eq:kappa-signs}, the surface gravities of the two consecutive nondegenerate Killing horizons satisfy $\kappa_1<0<\kappa_{2}$. Since the Wald entropy is positive on both horizons, Eq.~\eqref{eq:two-horizon-charges} then gives
\begin{equation}
    \mathcal Q_\xi[B_1]-\mathcal Q_\xi[B_2]<0.
    \label{eq:horizon-charge-negative}
\end{equation}
On the other hand, since the unit normal $n^{\mu}$ to $\mathcal{D}$ is timelike, the TCC-like curvature condition implies $\mathcal R_{\mu\nu}n^\mu n^\nu\ge0$. Together with $N>0$ in the interior of $\mathcal D$, the Noether charge balance law \eqref{eq:lovelock-charge-balance} then gives
\begin{equation}
    \mathcal Q_\xi[B_1]-\mathcal Q_\xi[B_2]\ge0 .
    \label{eq:charge-monotonicity}
\end{equation}
Eqs.~\eqref{eq:horizon-charge-negative} and \eqref{eq:charge-monotonicity} are incompatible. Therefore, the static region $\mathcal V$ bounded by two consecutive nondegenerate inner Killing horizons cannot exist. Hence, there can be at most one nondegenerate inner Killing horizon associated with $\xi^\mu$ inside the event horizon. $\square$

A brief remark on the positive entropy assumption of the theorem is in order. On the one hand, the condition $S_{\rm W}>0$ is essential to the proof, since it allows the sign of the horizon Noether charge to be determined from the surface gravity. On the other hand, the positivity assumption is also motivated by the thermodynamic interpretation of Wald entropy. Moreover, Wald entropy has been shown to obey the second law during quasi-stationary physical processes in Lovelock gravity, provided that the matter satisfies the null energy condition (NEC) \cite{Kolekar:2012tq}. The positivity of Wald entropy is also a natural requirement from the perspective of black hole thermodynamics. Nevertheless, unlike in Einstein gravity, where $S_{\rm W}=A_{H}/4>0$, the positivity of the Wald entropy is not automatic in Lovelock gravity because of higher-curvature contributions. For example, in Einstein-Gauss-Bonnet gravity with coupling constant $\alpha$, the Wald entropy is given by
\begin{equation}
S_{\rm W}[B]=\frac14\int_B(1+2\alpha\widetilde R)\,dS,
\end{equation}
where $\widetilde R$ is the intrinsic scalar curvature of the horizon cross section $B$. For $\alpha\geq0$, a nonnegative $\widetilde R$ guarantees positive Wald entropy, whereas negative intrinsic curvature may lead to negative entropy contributions.
%\sout{We therefore take $S_{\rm W}>0$ as a criterion of thermodynamic admissibility, selecting the branch of Lovelock black holes for which the horizon entropy retains its standard positive thermodynamic interpretation.} 
From a statistical mechanical perspective, black hole entropy is expected to be related to the logarithm of the number of underlying microscopic states \cite{Wald:1999vt}. In the conventional microstate counting interpretation, the number of accessible states is at least one, so the entropy is nonnegative. We therefore impose $S_{\rm W}>0$ as an additional condition on the Killing horizons considered here.
%%%%%%%%%%%%%%%%%%%%%%%%%%%%%%%%%%%%%%%%%%%%%%%%%%%%%%%%%%%%%%%

\section{Discussion}
\label{sec:discussion}
In this work, we show that the constraint on inner horizons inside static black holes can be derived using a Komar integral method. This provides an alternative formulation equivalent to the lapse equation approach developed in previous work. The Komar integral method has an important advantage. It recasts the local lapse equation as a bulk-boundary balance of gravitational Noether charge. For the two putative consecutive inner horizons, the TCC, which is guaranteed by the SEC through Einstein's equations, implies the corresponding bulk integral to be nonnegative, while the boundary Komar charges require the charge difference to be negative. These two sign requirements are incompatible. Since the Komar charge is precisely the Einstein gravity realization of the gravitational Noether charge, the Komar integral method admits a natural extension to higher-curvature gravity through the corresponding gravitational Noether charge. This approach has a particular advantage in higher-curvature theories, where the ADM decomposition of the field equations generally leads to a considerably more complicated lapse equation. In contrast, the gravitational Noether charge in Lovelock gravity retains a relatively simple covariant form. And this approach avoids rewriting the higher-curvature contributions as an effective stress-energy tensor. We then examine the corresponding gravitational Noether charge balance law in Lovelock gravity. In this balance law, the bulk contribution is governed by the Ricci-like tensor $\mathcal R_{\mu\nu}$, while the horizon Noether charge is determined by the surface gravity and the Wald entropy. 

Motivated by the ordinary TCC in Einstein gravity, we define a TCC-like curvature condition in Lovelock gravity. Together with the positivity of the Wald entropy, the gravitational Noether-charge method then shows that a black hole satisfying this TCC-like condition can contain at most one nondegenerate inner Killing horizon.

Our result reveals a deeper bulk-boundary structure behind the inner horizon constraint. The TCC-like curvature condition and the positivity of the Wald entropy are not two unrelated assumptions. Both are governed by the same curvature response tensor $P^{\mu\nu\rho\sigma}$. Its contraction with the spacetime curvature determines the Ricci-like tensor $\mathcal R_{\mu\nu}$, which controls the bulk contribution. Its contraction with the horizon binormal determines the Wald entropy, which controls the horizon contribution. The inner horizon constraint can therefore be understood as a compatibility requirement between the bulk contribution determined by the gravitational response and the horizon charge determined by thermodynamic data. This offers a new perspective on inner horizon constraints and suggests that the compatibility between gravitational response in the bulk and horizon thermodynamics may provide a useful framework for further investigations.

The present analysis is restricted to static spacetimes with regular nondegenerate Killing horizons and to Lovelock gravity, for which $\nabla_\mu P^{\mu\nu\rho\sigma}=0$. Several extensions deserve further investigation. One challenge concerns degenerate horizons, which require a separate treatment, since $\kappa=0$ removes the boundary charge sign information, and the bifurcation surface construction is no longer directly applicable. %\sout{Rotation modifies the bulk contribution to the charge balance law. In Einstein gravity, the rotational correction is a manifestly nonnegative square and hence preserves the required sign. In higher-curvature gravity, the analogous correction is not generally sign definite, making the extension to stationary black holes subtle.} 
Another arises from rotation, which modifies the gravitational Noether charge balance law. As shown in Appendix~\ref{app:rotation}, the horizon generators become $\chi_i^\mu=\xi^\mu+\Omega_i\psi^\mu$, and the bulk integral acquires an additional rotational contribution involving the gradient of the frame-dragging function $\omega$. In Einstein gravity, this rotational correction reduces to a manifestly nonnegative quadratic term, so the gravitational Noether charge method continues to exclude the existence of two or more nondegenerate inner Killing horizons in stationary axisymmetric spacetimes under the TCC. In Lovelock gravity, however, the rotational contribution involves additional higher-curvature terms whose sign is not generally definite. Consequently, the TCC-like curvature condition and the positivity of Wald entropy alone are insufficient to establish the corresponding inner horizon constraint for rotating black holes. And extending the analysis beyond Lovelock gravity presents further difficulties. More general higher-curvature theories with $\nabla_\mu P^{\mu\nu\rho\sigma}\neq0$ contain additional terms in both the Noether potential and its divergence, so the charge balance law is modified.

More broadly, our Noether charge based viewpoint is closely related to several active developments in gravitational theory. In Lovelock gravity, connections between bulk-surface decompositions of the gravitational action and Wald entropy have been explored~\cite{Kolekar:2010dm}. Recent studies~\cite{Hollands:2024vbe,Golshani:2024fry} of covariant phase space and dynamical black hole entropy have further developed the role of gravitational surface charges in black hole thermodynamics, especially in dynamical and nonstationary settings. And ongoing work~\cite{Wall:2024lbd} on higher-curvature gravity has highlighted the nontrivial status of horizon entropy, including its second law properties and possible corrections beyond the stationary Wald expression. Our result adds another perspective: gravitational Noether charges and horizon entropy can also constrain the allowed horizon structure of spacetime. From this viewpoint, the inner horizon problem provides a setting in which gravitational charges, higher-curvature dynamics, and horizon thermodynamics meet directly.

\section*{ACKNOWLEDGMENTS}
This work is supported by the Natural Science Foundation of China under Grant No.12375051 and Tianjin University Self-Innovation Fund Extreme Basic Research Project Grant No. 2025XJ22-0014 and 2025XJ21-0007.

\appendix

\section{Derivation of the gravitational Noether potential}
\label{app:noether-potential}

In this appendix, we derive the gravitational Noether potential used
in Sec.~\ref{sec:wald-charge}. We use the definitions and conventions introduced in the
main text.

For a diffeomorphism invariant gravitational Lagrangian
$L_{\rm grav}(g_{\mu\nu},R_{\mu\nu\alpha\beta})$, the metric variation
takes the form
\begin{equation}
\delta\!\left(\sqrt{-g}L_{\rm grav}\right)
=
\sqrt{-g}\,E_{\mu\nu}\delta g^{\mu\nu}
+
\sqrt{-g}\,\nabla_\mu\delta v^\mu .
\label{app:variation}
\end{equation}
The corresponding boundary term can be written as
\cite{Chakraborty:2014joa}
\begin{equation}
\delta v^\mu
=
2P^{\beta\nu\mu\alpha}
\nabla_\nu\delta g_{\alpha\beta}
-
2\delta g_{\alpha\beta}
\nabla_\nu P^{\beta\mu\nu\alpha}.
\label{app:boundary}
\end{equation}
Consider an infinitesimal diffeomorphism generated by a vector field
$\xi^\mu$. The metric variation is
\begin{equation}
\delta_\xi g_{\mu\nu}
=
\mathcal L_\xi g_{\mu\nu}
=
\nabla_\mu\xi_\nu+\nabla_\nu\xi_\mu .
\label{app:diffeo}
\end{equation}
Substituting Eq.~\eqref{app:diffeo} into
Eq.~\eqref{app:boundary}, we obtain
\begin{align}
\delta_\xi v^\mu
={}&
2P^{\beta\nu\mu\alpha}
\nabla_\nu
\left(
\nabla_\alpha\xi_\beta+\nabla_\beta\xi_\alpha
\right)
\nonumber\\
&-
2\left(
\nabla_\alpha\xi_\beta+\nabla_\beta\xi_\alpha
\right)
\nabla_\nu P^{\beta\mu\nu\alpha}.
\label{app:deltav1}
\end{align}
Using the algebraic symmetries and cyclic identity
$P^{\mu[\nu\alpha\beta]}=0$, together with the commutator $[\nabla_\mu,\nabla_\nu]\xi_\alpha=R_{\alpha\beta\mu\nu}\xi^\beta$,
Eq.~\eqref{app:deltav1} can be rearranged as
\begin{align}
\delta_\xi v^\mu
={}&
2\mathcal R^\mu{}_\nu\xi^\nu
-
4\xi^\nu
\nabla_\alpha\nabla_\beta
P^\mu{}_{\alpha\beta\nu}
\nonumber\\
&-
2\nabla_\nu
\left(
P^{\mu\nu\alpha\beta}\nabla_\alpha\xi_\beta
+
2\xi_\alpha\nabla_\beta
P^{\mu\nu\alpha\beta}
\right),
\label{app:deltav2}
\end{align}
where $\mathcal R^\mu{}_\nu=P^{\mu\alpha\beta\lambda}R_{\nu\alpha\beta\lambda}$. The first two terms in Eq.~\eqref{app:deltav2} are precisely the
combination appearing in the gravitational equation-of-motion tensor.
Indeed, using
\begin{equation}
E^\mu{}_\nu
=
\mathcal R^\mu{}_\nu
-
\frac12 L_{\rm grav}\delta^\mu{}_\nu
-
2\nabla_\alpha\nabla_\beta
P^\mu{}_{\alpha\beta\nu},
\label{app:E}
\end{equation}
Equation~\eqref{app:deltav2} therefore becomes
\begin{align}
\delta_\xi v^\mu
={}&
2E^\mu{}_\nu\xi^\nu
+
L_{\rm grav}\xi^\mu
\nonumber\\
&-
2\nabla_\nu
\left(
P^{\mu\nu\alpha\beta}\nabla_\alpha\xi_\beta
+
2\xi_\alpha\nabla_\beta
P^{\mu\nu\alpha\beta}
\right).
\label{app:deltav3}
\end{align}
From Eq.~\eqref{current}, the diffeomorphism Noether current is defined by
\begin{equation}
J^\mu_\xi
=
\delta_\xi v^\mu
-
\xi^\mu L_{\rm grav}.
\label{app:Jdef}
\end{equation}
Substituting Eq.~\eqref{app:deltav3} gives
\begin{align}
J^\mu_\xi
=
2E^\mu{}_\nu\xi^\nu
-
2\nabla_\nu
\left(
P^{\mu\nu\alpha\beta}\nabla_\alpha\xi_\beta
+
2\xi_\alpha\nabla_\beta
P^{\mu\nu\alpha\beta}
\right).
\label{app:Jdecomposition}
\end{align}
On shell, $E_{\mu\nu}=0$, and the Noether current reduces to a total divergence. The antisymmetric tensor appearing in this divergence therefore defines the gravitational Noether potential,
\begin{equation}
q_\xi^{\mu\nu}
=
P^{\mu\nu\alpha\beta}\nabla_\alpha\xi_\beta
+
2\xi_\alpha\nabla_\beta
P^{\mu\nu\alpha\beta}.
\label{app:q}
\end{equation}
This is precisely the Noether potential quoted in Eq.~\eqref{eq:general-IW-potential} of the
main text.

For Lovelock gravity,
$\nabla_\beta P^{\mu\nu\alpha\beta}=0$, so that
Eq.~\eqref{app:q} immediately reduces to
\begin{equation}
q_\xi^{\mu\nu}
=
P^{\mu\nu\alpha\beta}\nabla_\alpha\xi_\beta ,
\label{app:q-lovelock}
\end{equation}
which is the gravitational Noether potential used in the subsequent analysis.
%%%%%%%%%%%%%%%%%%%%%%%%%%%%%%%%%%%%%%%%%%%%%%%%%%%%%%%%%%

\section{Balance Law in Stationary Axisymmetric Spacetimes}
\label{app:rotation}

In this appendix, we extend the analysis using the gravitational Noether charge to stationary axisymmetric spacetimes. Compared with the static case, rotation introduces an additional contribution. In Einstein gravity, we show that this rotation contribution is nonnegative. In Lovelock gravity, however, the corresponding rotational contribution involves higher-curvature terms, and its sign is not guaranteed.

\subsection{Stationary Axisymmetric Spacetimes}
We consider stationary axisymmetric spacetimes admitting two commuting Killing vector fields, $\xi^\mu$ and $\psi^\mu$. The vector field $\xi^\mu$ generates time translations, while $\psi^\mu$ generates rotational symmetry with closed spacelike orbits. We choose coordinates adapted to these symmetries such that $\xi^\mu=(\partial_t)^\mu$ and $\psi^\mu=(\partial_\phi)^\mu$. We further assume that the spacetime possesses a $t$–$\phi$ reflection symmetry, under which the metric is invariant under the transformation $(t,\phi)\rightarrow(-t,-\phi)$. Therefore, this symmetry allows the metric to be written in the circular form
\begin{equation}
ds^2=-N^2\mathrm{d}t^2+\gamma^2(\mathrm{d}\phi-\omega\,\mathrm{d}t)^2
+q_{AB}\mathrm{d}x^A \mathrm{d}x^B,
 \label{eq:appB_metric}
\end{equation}
where $A,B=1,\ldots,d-1$, and all metric functions are independent of $t$ and $\phi$.

Suppose that there exist two consecutive nondegenerate inner Killing horizons $\mathcal H_1$ and $\mathcal H_2$ inside the event horizon, and denote the spacetime region bounded by them as $\mathcal V$. In this region, we consider the spacelike constant-$t$ hypersurfaces $\Sigma_t$ and perform the ADM decomposition of the metric. The lapse function is $N$, the shift vector is $\beta^a=-\omega\psi^a$.
Let $n^\mu$ be the future-directed unit normal to $\Sigma_t$. The stationary Killing vector $\xi^{\mu}$ can then be decomposed as
\begin{equation}
\xi^\mu=Nn^\mu-\omega\psi^\mu.
\label{eq}
\end{equation}
The extrinsic curvature of $\Sigma_t$, defined by $K_{ab}=\frac12\Lie_n h_{ab}$, is given by
\begin{equation}
\begin{aligned}
K_{ab}
&=\frac{1}{2N}
\left(\partial_t h_{ab}-D_a\beta_b-D_b\beta_a\right)\\
&=\frac{1}{2N}
\left(\psi_aD_b\omega+\psi_bD_a\omega\right),
\end{aligned}
\label{eq:AKab}
\end{equation}
where $D_a$ is the covariant derivative associated with $h_{ab}$. In the second equality, we have used stationarity and axisymmetry, which imply $\partial_t h_{ab}=0$ and $\Lie_\psi h_{ab}=2D_{(a}\psi_{b)}=0$, respectively. Since $\psi^aD_a\omega=0$, the trace of the extrinsic curvature vanishes, $K=0$.

\subsection{Gravitational Noether Charge in Einstein Gravity}
Unlike in static spacetimes, where the Killing horizons are generated by the same Killing vector $\xi^\mu$, the generator of a Killing horizon in a stationary axisymmetric spacetime generally takes the form
\begin{equation}
\chi_i^\mu=\xi^\mu+\Omega_i\psi^\mu,
\label{eq}
\end{equation}
where $\Omega_i$ is the constant angular velocity of the Killing horizon $\mathcal H_i$. Since different horizons may have different angular velocities, their Killing generators need not coincide. This distinction must be taken into account when evaluating the gravitational Noether charge on each horizon. We consider the gravitational Noether charge associated with the Killing vector $\chi_i^\mu$ , evaluated on the bifurcation surface $B_i$. In Einstein gravity, it takes the form
\begin{equation}
\mathcal Q^{\text{EH}}_{\chi_i}[B_i]
=-\frac{1}{16\pi}\int_{B_i}
\nabla^\mu\chi_i^\nu\,\mathrm dS_{\mu\nu}
=\frac{\kappa_i A_i}{8\pi}.
\label{eq:ASG}
\end{equation}
To relate the Noether charges on the two horizons, we introduce an auxiliary vector field
\begin{equation}
k^\mu=\xi^\mu+\omega\psi^\mu=Nn^\mu.
\label{eq}
\end{equation}
On each horizon $\mathcal H_i$, where $\omega=\Omega_i$, this vector field $k^{\mu}$ reduces to the corresponding Killing generator $\chi_i^\mu$. The difference between the gravitational Noether charges on the two bifurcation surfaces can therefore be expressed as
\begin{equation}
\begin{aligned}
&\mathcal Q^{\text{EH}}_{\chi_1}[B_1]-\mathcal Q^{\text{EH}}_{\chi_2}[B_2]\\
&=-\frac{1}{16\pi}
\int_{B_1-B_{2}}\left(\nabla^\mu\xi^\nu+\omega\nabla^\mu\psi^\nu\right)\mathrm dS_{\mu\nu}.
\end{aligned}
\label{eq}
\end{equation}
Applying Stokes' theorem to the hypersurface $\mathcal D=\Sigma_t\cap\mathcal V$, we obtain
\begin{equation}
\begin{aligned}
&\mathcal Q^{\text{EH}}_{\chi_1}[B_1]-\mathcal Q^{\text{EH}}_{\chi_2}[B_2]\\
&=-\frac{1}{8\pi}\int_{\mathcal D}
\nabla_\nu\left(
\nabla^\mu\xi^\nu+\omega\nabla^\mu\psi^\nu
\right)\mathrm d\Sigma_\mu\\
&=\frac{1}{8\pi}\int_{\mathcal D}
\left[
R_{\mu\nu}n^\mu k^\nu
+n_\mu(\nabla_\nu\omega)\nabla^\mu\psi^\nu
\right]\mathrm d\Sigma,
\end{aligned}
\label{eq:appB_stokes}
\end{equation}
where $d\Sigma_\mu=-n_\mu d\Sigma$, and we have used the Killing equation for $\xi^\mu$ and $\psi^\mu$ here. We now evaluate the additional rotational contribution. Using the Killing equation for $\psi^\mu$ and the orthogonality relation $n_\mu\psi^\mu=0$, we find
\begin{equation}
\begin{aligned}
n_\mu(\nabla^\mu\psi^\nu)(\nabla_\nu\omega)
&=\psi^\mu(\nabla^\nu n_\mu)(\nabla_\nu\omega)\\
&=\psi^a e^\mu{}_a e^\nu{}_b
(\nabla_\nu n^\mu)D^b\omega\\
&=\psi^a K_{ab}D^b\omega\\
&=\frac{\psi^2}{2N}D_a\omega D^a\omega\\
&=\frac{\gamma^2}{2N}
q^{AB}(\partial_A\omega)(\partial_B\omega)
\geq0.
\end{aligned}
\label{eq}
\end{equation}
Here, $e^\mu{}a$ denotes the tangent basis vectors of $\Sigma_t$ in spacetime, and we have used the definition $K_{ab}=e^\mu{}_a e^\nu{}_b\nabla_\nu n_\mu$, together with its explicit expression given in Eq.~\eqref{eq:AKab}. Combining Eqs. (B5), (B8), and (B9), we obtain
\begin{equation}
\begin{aligned}
&\frac{\kappa_1 A_1}{8\pi}
-\frac{\kappa_2 A_2}{8\pi}\\
&=\frac{1}{8\pi}\int_{\mathcal D}
\left[
NR_{\mu\nu}n^\mu n^\nu
+\frac{\gamma^2}{2N}q^{AB}
(\partial_A\omega)(\partial_B\omega)
\right]\mathrm d\Sigma\\
\end{aligned}
\label{eq:Abalancelaw}
\end{equation}
It is worth noting that the bulk integrand is precisely the spatial Laplacian of the lapse function $D^2N$. Thus, this balance law provides an integral formulation of the lapse equation, even in the presence of rotation.

In Einstein gravity, the rotational correction of stationary axisymmetric spacetimes in Eq.~\ref{eq:Abalancelaw} is manifestly nonnegative. Under the TCC, which follows from the SEC through Einstein's equations, the right hand side of Eq.~\ref{eq:Abalancelaw} is therefore nonnegative. On the other hand, the surface gravities of two consecutive nondegenerate inner Killing horizons satisfy $\kappa_1<0<\kappa_2$, making the left hand side strictly negative. This contradiction excludes the existence of two or more nondegenerate inner Killing horizons, showing that the gravitational Noether charge method can also be used to prove the inner horizon constraint in stationary axisymmetric spacetimes of Einstein gravity.

\subsection{Gravitational Noether Charge in Lovelock Gravity}

We now extend the above analysis to Lovelock gravity. As shown in the main text, the gravitational Noether charge associated with the Killing generator $\chi_i^\mu$ on the bifurcation surface $B_i$ is
\begin{equation}
\begin{aligned}
\mathcal Q_{\chi_i}[B_i]
&=-\frac{1}{16\pi}\int_{B_i}
P^{\mu\nu\rho\sigma}\nabla_\rho\chi_{i\sigma}\,\mathrm dS_{\mu\nu}\\
&=\frac{\kappa_i}{2\pi}S_{\rm W}[B_i].
\end{aligned}
\label{eq}
\end{equation}
Here, $P^{\mu\nu\rho\sigma}=\partial L_{\rm grav}/\partial R_{\mu\nu\rho\sigma}$ is the curvature response tensor of Lovelock gravity.

Following the same procedure as in Einstein gravity, we consider the difference between the Noether charges on the two bifurcation surfaces. Applying Stokes' theorem to $\mathcal D$, we obtain
\begin{equation}
\begin{aligned}
&\mathcal Q_{\chi_1}[B_1]
-\mathcal Q_{\chi_2}[B_2]\\
&=-\frac{1}{8\pi}\int_{\mathcal D}
\nabla_\nu\Big[
P^{\mu\nu\rho\sigma}
\big(\nabla_\rho\xi_\sigma
+\omega\nabla_\rho\psi_\sigma\big)
\Big]d\Sigma_\mu\\
&=\frac{1}{8\pi}\int_{\mathcal D}
\left(
\mathcal R_{\mu\nu}n^\mu k^\nu
+n_\mu(\nabla_\nu\omega)
P^{\mu\nu\rho\sigma}\nabla_\rho\psi_\sigma
\right)d\Sigma,
\end{aligned}
\label{eq}
\end{equation}
where we have used the Lovelock identity $\nabla_\mu P^{\mu\nu\rho\sigma}=0$ and the Killing equations for $\xi^\mu$ and $\psi^\mu$. The additional rotational contribution is
\begin{equation}
\mathcal I_{\rm rot}
=n_\mu(\nabla_\nu\omega)
P^{\mu\nu\rho\sigma}\nabla_\rho\psi_\sigma.
\label{eq:Arot}
\end{equation}
%%%%%%%%%%%%%%%%%%%%%%%%%%%%
We next examine this rotational contribution. We perform the $(d+1)$ decomposition of $\nabla_\rho\psi_\sigma$ with respect to $\Sigma_t$. Using the Killing equation for $\psi^\mu$ and $n_\mu\psi^\mu=0$, we obtain
\begin{equation}
\begin{aligned}
    \nabla_\rho\psi_\sigma&=(h_\rho{}^\alpha-n_\rho n^\alpha)
(h_\sigma{}^\beta-n_\sigma n^\beta) \nabla_\alpha\psi_\beta\\
&=D_\rho\psi_\sigma
-n_\rho K_{\sigma a}\psi^a
+n_\sigma K_{\rho a}\psi^a.
\end{aligned}
\label{eq}
\end{equation}
Substituting this decomposition into Eq.~\eqref{eq:Arot}, and using the explicit expression for $K_{ab}$ in Eq.~\eqref{eq:AKab} together with $\psi^\mu\nabla_\mu\omega=0$ and $\psi^\mu\psi_\mu=\gamma^2$, we obtain
\begin{equation}
\begin{aligned}
\mathcal I_{\rm rot}
={}&n_\mu P^{\mu\nu\rho\sigma}
(\nabla_\nu\omega)D_\rho\psi_\sigma\\
&-\frac{\gamma^2}{N}
n_\mu n_\rho P^{\mu\nu\rho\sigma}
(\nabla_\nu\omega)(\nabla_\sigma\omega).
\end{aligned}
\label{eq:appB_rot_reduction}
\end{equation}

In Einstein gravity, the curvature response tensor takes the form
\begin{equation}
P^{\mu\nu\rho\sigma}
=\frac12\left(g^{\mu\rho}g^{\nu\sigma}
-g^{\mu\sigma}g^{\nu\rho}\right).
\end{equation}
The first term in Eq.~\eqref{eq:appB_rot_reduction} vanishes, while $n_\mu n_\rho P^{\mu\nu\rho\sigma}=-h^{\nu\sigma}/2$. Consequently, the rotational contribution reduces to
\begin{equation}
\mathcal I_{\rm rot}
=\frac{\gamma^2}{2N}D_\mu\omega D^\mu\omega
\geq0,
\end{equation}
recovering the result obtained above. In Lovelock gravity, however, $P^{\mu\nu\rho\sigma}$ contains higher-curvature contributions. The first term in Eq.~\eqref{eq:appB_rot_reduction} need not vanish, and the second term is no longer necessarily nonnegative. Therefore, the sign of $\mathcal I_{\rm rot}$ cannot be determined from the TCC-like curvature condition and the positivity of Wald entropy alone.

\bibliography{references}

\end{document}